\font\smallit=cmti10
\renewcommand\section{\@startsection {section}{1}{\z@}
{-30pt \@plus -1ex \@minus -.2ex}
{2.3ex \@plus.2ex}
{\normalfont\normalsize\bfseries}}
\renewcommand\subsection{\@startsection{subsection}{2}{\z@}
{-3.25ex\@plus -1ex \@minus -.2ex}
{1.5ex \@plus .2ex}
{\normalfont\normalsize\bfseries}}
\renewcommand{\@seccntformat}[1]{\csname the#1\endcsname. }
\newtheorem{theorem}{Theorem}
\newtheorem{lemma}{Lemma}
\newtheorem{result}{Result}
\newenvironment{proof}
{\vspace{1ex}\noindent{\it Proof.}\hspace{0.5em}} 
{\hfill $\Box$}
\begin{document}

\begin{center}
\uppercase{\bf Extension of Stanley's Theorem for Partitions}
\vskip 20pt
{\bf Manosij Ghosh Dastidar}\\
{\smallit South Point High School, Kolkata, India}\\
{\tt gdmanosij@gmail.com}\\ 
\vskip 10pt
{\bf Sourav Sen Gupta}\\
{\smallit Indian Statistical Institute, Kolkata, India}\\
{\tt sg.sourav@gmail.com}\\ 
\end{center}

\vskip 30pt

\centerline{\bf Abstract}

\noindent
In this paper we present an extension of Stanley's theorem related to partitions of positive integers. Stanley's theorem states a relation between ``the sum of the numbers of distinct members in the partitions of a positive integer $n$'' and ``the total number of 1's that occur in the partitions of $n$''. Our generalization states a similar relation between ``the sum of the numbers of distinct members in the partitions of $n$'' and the total number of 2's or 3's or any general $k$ that occur in the partitions of $n$ and the subsequent integers. We also apply this result to obtain an array of interesting corollaries, including alternate proofs and analogues of some of the very well-known results in the theory of partitions. We extend Ramanujan's results on congruence behavior of the `number of partition' function $p(n)$ to get analogous results for the `number of occurrences of an element $k$ in partitions of $n$'. Moreover, we present an alternate proof of Ramanujan's results in this paper.

\thispagestyle{empty}
\baselineskip=15pt
\vskip 30pt 

\section{Introduction}
Partitioning a positive integer $n$ as sum of certain positive integers is a well known problem in the domain of number theory~\cite{wikin}. One of the very well referred results in this area is the one presented by Stanley~\cite{stanley1} which states the following.
\begin{center}
{\em ``The total number of 1's that occur among all unordered partitions of a positive integer is equal to the sum of the numbers of distinct members of those partitions.''}
\end{center}
One direction of generalizing Stanley's theorem is the Elder's theorem~\cite{elder1} which states that
\begin{center}
{\em ``total number of occurrences of an integer $k$ among all unordered partitions of $n$ is equal to the number of occasions that a part occurs $k$ or more times in a partition.''}
\end{center} 
In case of Elder's statement, a partition which contains $r$ parts that each occur $k$ or more times contributes $r$ to the sum in question.

\subsection{Our Contribution}
In this paper we generalize Stanley's theorem in a different direction. Let $S(n)$ be the number of distinct members present in the unordered partitions of $n$ and $Q_k(n)$ be the number of occurrences of $k$ in all the unordered partitions of $n$. Then Stanley's theorem says that given any positive integer $n$, the set of unordered partitions of $n$ satisfy $S(n) = Q_1(n)$. Let us present an example for $n=4$ corresponding to Stanley's theorem in Table~\ref{tab1}.

\begin{table}[htb]
\centering
\begin{tabular}{|c|c|c|}
\hline
\multicolumn{3}{|c|}{Stanley's theorem for $n=4$}\\
\hline
\hline
Partitions & Number of Distinct Members  & Number of 1's\\
\hline
4 & 1 & 0 \\
\hline
3+1 & 2 & 1 \\
\hline
2+2 & 1 & 0 \\
\hline
2+1+1 & 2 & 2 \\
\hline
1+1+1+1 & 1 & 4 \\
\hline
\hline
Total & {\bf 7} & {\bf 7} \\
\hline
\end{tabular}
\caption{Example of Stanley's theorem}
\label{tab1}
\end{table}

\noindent We extend this to prove that given any two positive integers $n$ and $k$, $S(n) = \sum_{i=0}^{k-1} Q_k(n+i)$.

\subsection{Idea for Extension}
The main idea behind the extension is the observation that Stanley's theorem can be extended to a relation between the distinct members of the partitions of $n$ and the number of occurrences of any positive integer $k$ in the partition table of $n$ and subsequent integers. Let us first present a few examples to illustrate our motivation.
\begin{description}
\item [$k=1$:] The sum of the numbers of distinct members of all the unordered 
partitions of a positive integer $n$ is equal to the total number of 1's that 
occur among the partitions of $n$. [This is Stanley's theorem]
\item [$k=2$:] The sum of the numbers of distinct members of all the unordered 
partitions of a positive integer $n$ is equal to the total number of 2's that 
occur among the partitions of $n$ and $n+1$.
\item [$k=3$:] The sum of the numbers of distinct members of all the unordered 
partitions of a positive integer $n$ is equal to the total number of 3's that 
occur among the partitions of $n$, $n+1$ and $n+2$.
\end{description}
One may note that Stanley's theorem is the special case with $k=1$, as mentioned above. The claim of this paper is that the statements made above are correct and can be extended to any value of $k$. The general result is formally presented in the next section.

\section{Extension of Stanley's Theorem}
Before we discuss the results, let us set some notational conventions. We mention $S(n)$ and $Q_k(n)$ once again for the sake of completion.
\begin{itemize}
\item $P(n)$: Number of unordered partitions of a positive integer $n$.
\item $S(n)$: Number of {\em distinct members} present in the unordered partitions 
of $n$.
\item $R_k(n)$: Number of partitions of $n$ containing the positive integer $k$.
\item $Q_k(n)$: Number of occurrences of $k$ in all the unordered partitions of $n$.
\end{itemize}
Based on the notation stated above, Stanley's theorem can be expressed as follows.

\begin{theorem}[Stanley's Theorem]
\label{thm1}
Given any positive integer $n$, the set of unordered partitions of $n$ 
satisfy $$S(n) \: = \: Q_1(n).$$
\end{theorem}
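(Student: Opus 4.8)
The plan is to prove Stanley's theorem, $S(n) = Q_1(n)$, by establishing a bijection-style double counting argument that interprets both quantities as counting the same set of pairs. Both $S(n)$ and $Q_1(n)$ are sums over the set of partitions of $n$, so the natural strategy is to find a common combinatorial object counted by each side.

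First I would unpack the two quantities. The right-hand side $Q_1(n)$ counts the total number of $1$'s appearing across all partitions of $n$; equivalently, it counts pairs $(\lambda, j)$ where $\lambda$ is a partition of $n$ and $j$ indexes an occurrence of the part $1$ in $\lambda$. The left-hand side $S(n)$ sums, over each partition $\lambda$ of $n$, the number of distinct part-sizes appearing in $\lambda$; equivalently, it counts pairs $(\lambda, d)$ where $d$ is a distinct value occurring as a part of $\lambda$. The goal is to show these two collections of pairs have the same cardinality.

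The cleanest route I would take is to re-index each count by the part value. For the right-hand side, observe that $Q_1(n)$ equals the number of partitions of $n$ that contain at least one $1$, weighted by how many $1$'s they contain; a standard manipulation is to note that the number of $1$'s summed over all partitions of $n$ equals $\sum_{m\ge 1} R_1^{(\ge m)}(n)$, the number of partitions having at least $m$ copies of $1$. Since a partition with at least $m$ copies of $1$ corresponds (by deleting $m$ ones) to a partition of $n-m$, this gives $Q_1(n) = \sum_{m\ge 1} P(n-m) = \sum_{j=0}^{n-1} P(j)$. For the left-hand side, I would show $S(n) = \sum_{k\ge 1} R_k(n)$, since each distinct part value $k$ occurring in a partition contributes exactly $1$ to $S(n)$ and exactly $1$ to the count $R_k(n)$. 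Then the key identity to establish is $\sum_{k\ge 1} R_k(n) = \sum_{j=0}^{n-1} P(j)$.

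The main obstacle, and the crux of the proof, is evaluating $\sum_{k\ge 1} R_k(n)$. Here I would argue that $R_k(n)$, the number of partitions of $n$ containing at least one part equal to $k$, equals $P(n-k)$: deleting a single copy of $k$ from such a partition yields a partition of $n-k$, and conversely adjoining a part $k$ to any partition of $n-k$ produces a partition of $n$ containing $k$ — this correspondence is a bijection provided we are careful that ``containing $k$'' means at least one occurrence, so the map is well defined in both directions. Granting $R_k(n) = P(n-k)$, we get $\sum_{k=1}^{n} R_k(n) = \sum_{k=1}^{n} P(n-k) = \sum_{j=0}^{n-1} P(j)$, which matches the expression for $Q_1(n)$ and completes the proof. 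The subtle point to verify carefully is the bijection underlying $R_k(n) = P(n-k)$, ensuring no double-counting occurs when a partition contains multiple copies of $k$; this is handled by counting partitions (each once) rather than occurrences.
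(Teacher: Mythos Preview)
Your argument is correct. The two key identities you use --- $R_k(n)=P(n-k)$ and $Q_1(n)=\sum_{j=0}^{n-1}P(j)$ --- are both valid, and the double-counting identity $S(n)=\sum_{k\ge 1}R_k(n)$ is immediate from the definition of $S(n)$. The subtle point you flag (that $R_k(n)$ counts partitions, not occurrences, so the map ``delete one copy of $k$'' is a genuine bijection) is exactly the right thing to check, and you handle it correctly.

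As for comparison with the paper: the paper does \emph{not} prove Theorem~\ref{thm1} at all. Stanley's theorem is simply quoted as a known result and then used as the base case in the inductive proof of Theorem~\ref{thm2}. So your write-up supplies something the paper omits. It is worth noting, though, that two of your ingredients appear independently later in the paper: your bijection $R_k(n)=P(n-k)$ is precisely Lemma~\ref{lem2} (stated there as $P(n)=R_k(n+k)$), and your formula $Q_1(n)=\sum_{j=0}^{n-1}P(j)$ is the paper's Result~\ref{res1}. The paper derives Result~\ref{res1} via the recursion of Lemma~\ref{lem1}, whereas you obtain it by the threshold-counting trick $Q_1(n)=\sum_{m\ge 1}\#\{\lambda\vdash n:\lambda\text{ has }\ge m\text{ ones}\}$; both routes are standard and equally short. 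The one genuinely new step in your argument relative to the paper is the observation $S(n)=\sum_k R_k(n)$, which is elementary but is exactly what closes the loop and yields a self-contained proof of Stanley's theorem from the paper's own lemmas.
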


The extension to Stanley's theorem is as stated in Theorem~\ref{thm2}. But before that, we will prove the following two technical results which are used in the proof of Theorem~\ref{thm2}.

\subsection{Preliminary Technical Results}
\begin{lemma}
\label{lem1}
Given any positive integer $n$ and any positive integer $k$,
$$Q_k(n+k) = Q_k(n) + R_k(n+k).$$
\end{lemma}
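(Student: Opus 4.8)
The plan is to establish a weight-preserving bijection between the partitions of $n+k$ that contain at least one part equal to $k$ and the partitions of $n$, realized by deleting a single copy of $k$. First I would note that $Q_k(n+k)$, the total number of occurrences of $k$ across all partitions of $n+k$, receives a contribution only from those partitions that actually contain $k$; a partition with no part equal to $k$ contributes zero. Hence it suffices to sum the multiplicity of $k$ over the partitions of $n+k$ that contain $k$, and these are exactly the $R_k(n+k)$ partitions counted on the right-hand side.

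Next I would define the map $\phi$ sending a partition $\lambda$ of $n+k$ containing $k$ to the partition of $n$ obtained by removing exactly one copy of $k$ (i.e. lowering its multiplicity by one). This is well defined, and its inverse simply adjoins one part equal to $k$ to an arbitrary partition of $n$; so $\phi$ is a bijection. In particular the number of partitions of $n+k$ containing $k$ equals the number of partitions of $n$, giving $R_k(n+k) = P(n)$ as a byproduct.

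The key observation is that under $\phi$ the multiplicity of $k$ decreases by exactly one: if $\mu = \phi(\lambda)$ has $j$ copies of $k$, then $\lambda$ has $j+1$ copies. Summing over all partitions $\lambda$ of $n+k$ containing $k$ and reindexing by $\mu = \phi(\lambda)$ ranging over all partitions of $n$,
\[
Q_k(n+k) \;=\; \sum_{\mu} \bigl( (\text{multiplicity of } k \text{ in } \mu) + 1 \bigr) \;=\; Q_k(n) + P(n),
\]
where the sum runs over all partitions $\mu$ of $n$. Since $P(n) = R_k(n+k)$ by the bijection, this yields $Q_k(n+k) = Q_k(n) + R_k(n+k)$, as claimed.

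I do not anticipate a serious obstacle; the only point requiring care is the bookkeeping of multiplicities, namely verifying that the extra part $k$ adjoined by $\phi^{-1}$ contributes precisely one additional occurrence per partition, so that these extra occurrences total $P(n) = R_k(n+k)$ while the remaining occurrences reproduce $Q_k(n)$ exactly.
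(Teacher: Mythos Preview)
Your argument is correct. It differs from the paper's in a pleasant way. The paper proves Lemma~\ref{lem1} by starting from the set $X$ of partitions of $n$ \emph{that already contain $k$}, adding a part $k$ to each to obtain a set $Y$ of partitions of $n+k$ with at least two copies of $k$, and then handling separately the ``leftover'' partitions of $n+k$ with exactly one copy of $k$; the telescoping $[Q_k(n)+R_k(n)] + [R_k(n+k)-R_k(n)]$ then gives the result. You instead work with the cleaner bijection between \emph{all} partitions of $n$ and partitions of $n+k$ containing $k$ (add/delete one copy of $k$) and simply track how the multiplicity of $k$ changes, obtaining $Q_k(n+k)=Q_k(n)+P(n)$ directly. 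This is in fact the bijection the paper uses to prove its Lemma~\ref{lem2}, so your single argument delivers both lemmas at once: $R_k(n+k)=P(n)$ drops out as the cardinality statement, and the identity of Lemma~\ref{lem1} is the weighted refinement. The paper's route keeps the two lemmas logically independent, while yours is shorter and avoids the case split on ``exactly one copy of $k$.''
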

\begin{proof}
Let us denote the set of partitions of $n$ containing $k$ as 
$$ X = \{ A_1, A_2, \ldots, A_r \},$$ 
where $r = R_k(n)$ is the number of such partitions, and the partitions are 
written in additive form (e.g., $5 = 2+2+1$).

The set defined above gives us $n = A_1 = A_2 = \cdots = A_r$, and thus 
$n + k = A_1 + k = A_2 + k = \cdots = A_r + k$. Now all the partitions in this 
collection $$ Y = \{A_1 + k, A_2 + k, \ldots, A_r + k\}$$ 
are partitions of $n+k$ containing $k$, but this is not an exhaustive list. 

There are additional partitions of $n+k$ containing $k$. One may note that 
the list $Y$ contains $r = R_k(n)$ partitions, whereas the number of partitions
of $n+k$ containing $k$ is $R_k(n+k)$. Thus, there are $R_k(n+k) - R_k(n)$ 
additional partitions of $n+k$ containing $k$. In this direction, we claim the 
following.

{\flushleft \sf Claim:} In each of the additional $R_k(n+k) - R_k(n)$ 
partitions of $n+k$ containing $k$, the member $k$ occurs only once.

{\flushleft \sf Proof:} Suppose the claim is false, that is, there exists a 
partition $B_1 \not\in Y$ of $n+k$ that contains more than one copy of $k$. 
Then $B_1 - k$ must contain at least one copy of $k$. Note that $n+k = B_1$, 
and thus $n = B_1 - k$, which implies that $B_1 - k$ is a partition of $n$ 
containing $k$. If so, then one must have $B_1 - k \in X$ and therefore 
$B_1 \in Y$. But this poses a contradiction, and hence the claim is true.

{\flushleft In view of our previous discussion and the claim above, we obtain 
the following}
\begin{itemize}
\item Each partition of $n+k$ in $Y$ contains $k$, and there are $r = R_k(n)$ 
partitions in $Y$.
\item There exist $R_k(n+k) - R_k(n)$ additional partitions of $n+k$ which 
contain $k$.
\item Each of these additional partitions contain exactly one copy of $k$.
\end{itemize}
Moreover, one may note that the total number of $k$'s in the partitions 
belonging to $X$ is $Q_k(n)$, as we selected all partitions of $n$ containing 
$k$. So, total number of $k$'s in the partitions belonging to $Y$ is 
$Q_k(n) + R_k(n)$, as we have added a $k$ to each partition from $X$. Thus, 
the total number of $k$'s occurring in the partitions of $n+k$ is given by
$$Q_k(n+k) = \left[Q_k(n) + R_k(n)\right] + \left[R_k(n+k) - R_k(n)\right] 
= Q_k(n) + R_k(n+k).$$
The analysis holds for all $n, k > 0$, and hence the result.
\end{proof}

\begin{lemma}
\label{lem2}
Given any positive integer $n$ and any positive integer $k$,
$$P(n) = R_k(n+k).$$
\end{lemma}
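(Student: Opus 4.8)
The plan is to prove the identity by exhibiting an explicit bijection between the set of all partitions of $n$ and the set of those partitions of $n+k$ that contain the part $k$; once such a bijection is in hand, comparing the cardinalities of the two finite sets yields $P(n) = R_k(n+k)$ directly. This mirrors the ``add a $k$ / remove a $k$'' manipulation already used in the proof of Lemma~\ref{lem1}, so the construction should feel natural in context.

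First I would define the forward map $\phi$ sending a partition $\lambda$ of $n$ to the partition $\lambda \cup \{k\}$ of $n+k$, obtained by appending a single extra part equal to $k$ (equivalently, increasing the multiplicity of $k$ by one). Since the quantity being partitioned grows from $n$ to $n+k$ and the resulting object manifestly contains at least one copy of $k$, the image $\phi(\lambda)$ is a partition of $n+k$ of the kind counted by $R_k(n+k)$. Next I would construct the candidate inverse $\psi$, sending a partition $\mu$ of $n+k$ that contains $k$ to the partition $\mu \setminus \{k\}$ of $n$ obtained by deleting one copy of the part $k$. This is well defined precisely because every partition counted by $R_k(n+k)$ has a part equal to $k$ available for removal, and the deletion lowers the total from $n+k$ to $n$. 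It is then routine to verify that $\psi \circ \phi$ and $\phi \circ \psi$ are both the identity, since appending a $k$ and then deleting one (and conversely) returns the original multiset of parts.

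The one point that warrants a moment's care, and which I expect to be the main though mild obstacle, is confirming that $\phi$ really surjects onto \emph{all} partitions of $n+k$ containing $k$, and not merely onto some proper subset. This is settled by the inverse: for any such $\mu$ we have $\mu = \phi(\psi(\mu))$, so nothing in the target set is missed. With $\phi$ established as a bijection, the two sets are equinumerous, giving $P(n) = R_k(n+k)$ for all positive integers $n$ and $k$, as claimed.
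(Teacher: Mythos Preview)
Your proof is correct and follows essentially the same approach as the paper: both establish the bijection ``add a part $k$ / remove a part $k$'' between partitions of $n$ and partitions of $n+k$ containing $k$. The only cosmetic difference is that the paper argues surjectivity by contradiction (any $B_1 \notin Y$ containing $k$ would satisfy $B_1 - k \in X$, hence $B_1 \in Y$), whereas you verify the two-sided inverse explicitly.
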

\begin{proof}
Let us denote the set of all partitions of $n$ as 
$$ X = \{ A_1, A_2, \ldots, A_p \},$$ 
where $p = P(n)$ is the number of all partitions, and the partitions are 
written in additive form (e.g., $5 = 2+2+1$).

The set defined above gives us $n = A_1 = A_2 = \cdots = A_p$, and thus 
$n + k = A_1 + k = A_2 + k = \cdots = A_p + k$. Now all the partitions in this 
collection $$Y = \{A_1 + k, A_2 + k, \ldots, A_p + k\}$$ 
are partitions of $n+k$ containing $k$. In this direction, we claim the 
following.

{\flushleft \sf Claim:} There are no other partitions of $n+k$, apart from 
those in $Y$, that contain $k$.

{\flushleft \sf Proof:} Suppose the claim is false, that is, there exists a 
partition $B_1 \not\in Y$ of $n+k$ that contains $k$. Note that $n+k = B_1$, 
and thus $n = B_1 - k$, which implies that $B_1 - k$ is a partition of $n$. 
If so, then one must have $B_1 - k \in X$ and therefore $B_1 \in Y$. But this 
poses a contradiction, and hence the claim is true.

{\flushleft In view of our previous discussion and the claim above, we obtain 
the following}
\begin{itemize}
\item Each partition of $n+k$ in $Y$ contains $k$, and there are $p = P(n)$ 
partitions in $Y$.
\item There exist no additional partitions of $n+k$ which contain $k$.
\end{itemize}
Thus, we have the number of partitions of $n+k$ containing $k$ as 
$R_k(n+k) = |Y| = P(n)$. The analysis holds for all $n, k > 0$, and hence the 
result.
\end{proof}

\subsection{The Main Result}
Now that we have proved the two lemmas required for our result, we can state and prove the main theorem of this paper, as follows.

\begin{theorem}[Extension of Stanley's Theorem]
\label{thm2}
Given any positive integer $n$ and any positive integer $k$,
$$S(n) \: = \: Q_k(n) + Q_k(n+1) + Q_k(n+2) + \cdots + Q_k(n+k-1) \: = \: 
\sum_{i=0}^{k-1} Q_k(n+i).$$
\end{theorem}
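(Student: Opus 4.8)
The plan is to establish the identity $S(n) = \sum_{i=0}^{k-1} Q_k(n+i)$ by connecting it back to Stanley's theorem (Theorem~\ref{thm1}, the case $k=1$) through the two lemmas just proved. The central observation is that Lemma~\ref{lem1} gives a recurrence relating $Q_k$ at $n+k$ to $Q_k$ at $n$, and Lemma~\ref{lem2} lets me replace the correction term $R_k(n+k)$ by the partition count $P(n)$. So my first step will be to rewrite Lemma~\ref{lem1} using Lemma~\ref{lem2} as the clean recurrence
$$Q_k(n+k) = Q_k(n) + P(n),$$
which holds for all positive integers $n$ and $k$.

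Let me think about how to analyze the sum. Let me denote the target sum by $T(n) = \sum_{i=0}^{k-1} Q_k(n+i)$. **The key step** is to show this sum is actually independent of $k$, or more precisely that it equals $S(n)$ for every $k$. I expect the cleanest route is induction on $n$, or alternatively a direct telescoping argument. Consider the difference $T(n+1) - T(n)$: shifting the window of $k$ consecutive arguments by one replaces the term $Q_k(n)$ with $Q_k(n+k)$, so
$$T(n+1) - T(n) = Q_k(n+k) - Q_k(n) = P(n),$$
where the last equality is exactly the recurrence derived above. Thus $T(n)$ satisfies the same first-order difference equation regardless of $k$. Since I also know $S(n) = Q_1(n) = T(n)$ in the base case $k=1$ by Stanley's theorem, it suffices to show that $S$ obeys the same recurrence, namely $S(n+1) - S(n) = P(n)$.

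**The main obstacle** will be nailing down the relation $S(n+1) - S(n) = P(n)$ directly, together with a matching initial condition, so that the two sequences $T(n)$ and $S(n)$ coincide for all $n$. One way to get the recurrence for $S$ is to invoke Stanley's theorem for $k=1$, which already tells me $S(n) = Q_1(n) = T_{k=1}(n)$, and then note that the difference equation $T(n+1)-T(n) = P(n)$ holds for the $k=1$ window as well, forcing $S(n+1) - S(n) = P(n)$. Since both $S(n)$ and the general $T(n)$ satisfy the identical recurrence with the identical value at the smallest admissible index, they must agree everywhere. I would therefore organize the proof as: (i) combine the two lemmas into the recurrence $Q_k(n+k) = Q_k(n) + P(n)$; (ii) telescope to show $T(n+1) - T(n) = P(n)$ for every $k$; (iii) use Stanley's theorem to anchor the base case and conclude $T(n) = S(n)$ for all $k$ by an induction that matches the two sequences term by term.

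The delicate point to verify carefully is the \emph{boundary} of the induction: I must check that the windows for different $k$ agree at a common starting value of $n$, so that equality of the recurrences propagates to equality of the sequences rather than merely equality of their increments. Concretely, I would confirm directly that for the smallest relevant $n$ the sum $\sum_{i=0}^{k-1} Q_k(n+i)$ evaluates to $S(n)$, using small-case bookkeeping of partitions, and then let the shared recurrence carry the identity to all larger $n$. Everything else is routine substitution once the recurrence $Q_k(n+k) = Q_k(n) + P(n)$ is in hand.
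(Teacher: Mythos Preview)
Your proposal is correct and follows essentially the same route as the paper's main proof: both hinge on combining Lemma~\ref{lem1} and Lemma~\ref{lem2} into the recurrence $Q_k(n+k)=Q_k(n)+P(n)$, then comparing the window sum $T(n)=\sum_{i=0}^{k-1}Q_k(n+i)$ against $S(n)=Q_1(n)$ via the shared first-order difference relation $T(n+1)-T(n)=P(n)$ and an induction on $n$ anchored at $n=1$. The only organizational difference is that the paper wraps this in an outer induction on $k$ which is never actually invoked in the inductive step, whereas you (more cleanly) fix $k$ from the start and induct only on $n$; your framing makes explicit that the recurrence for $T$ is $k$-independent, which is exactly why the outer induction is dispensable.
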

\begin{proof}
We shall prove this result by induction over $k$, as follows.

{\flushleft \sf Base Case ($k=1$):} $S(n) = Q_1(n)$ is true for all $n$ by 
Theorem~\ref{thm1} (Stanley's theorem).

{\flushleft \sf Assumption for $k$:} Let us assume that for some $k \geq 1$, 
and for all $n > 0$, $$S(n) = \sum_{i=0}^{k-1} Q_k(n+i).$$

{\flushleft \sf Proof for $k+1$:} Here, we have to prove that 
$S(n) = \sum_{i=0}^{k} Q_{k+1}(n+i)$ for all $n > 0$. Let us prove this result 
by induction on $n$, as follows.
\begin{description}
\item [$n=1$ :] For this base case of induction, we have 
$$\sum_{i=0}^{k} Q_{k+1}(1+i) = \sum_{i=0}^{k-1} Q_{k+1}(1+i) + Q_{k+1}(k+1) = 
0 + 1 = 1 = Q_1(1) = S(1).$$
\item [$n$ :] Let us assume that for some $n \geq 1$, $S(n) = Q_1(n) 
= \sum_{i=0}^{k} Q_{k+1}(n+i).$
\item [$n+1$ :] For this case with $n+1$, we have
\begin{eqnarray*}
S(n+1) & = & Q_1(n+1) \\
& = & Q_1(n) + R_1(n+1), \quad \mbox{from Lemma~\ref{lem1}}\\
& = & Q_1(n) + P(n), \quad \mbox{from Lemma~\ref{lem2}}\\
& = & \sum_{i=0}^{k} Q_{k+1}(n+i) + R_{k+1}(n+k+1), \quad \mbox{from assumption and Lemma~\ref{lem2}}\\
& = & \sum_{i=0}^{k} Q_{k+1}(n+i) + \left[Q_{k+1}(n+k+1) - Q_{k+1}(n)\right], \quad \mbox{from Lemma~\ref{lem1}}\\
& = & \sum_{i=1}^{k+1} Q_{k+1}(n+i) = \sum_{i=0}^{k} Q_{k+1}(n+i+1) = \sum_{i=0}^{k} Q_{k+1}((n+1)+i)
\end{eqnarray*}
\end{description}
Thus, by induction on $n$, the result is proved in case of $k+1$ for all 
$n>0$, and hence by induction on $k$, the result is proved for all $k > 0$ as 
well.
\end{proof}

\subsection{Illustrative Example}
In Table~\ref{tab2}, we present a modified version of the initial example ($n=4$) as an evidence to this observation and the validity of Theorem~\ref{thm2}. Recall from Table~\ref{tab1} that $S(4) = 7$. Each cell in the main segment of Table~\ref{tab2} represents the value of $Q_k(n)$ corresponding to the row $n$ and column $k$. The cells marked (-) are the values of $Q_k(n)$ that we do not require in Theorem~\ref{thm2}, and hence have been omitted from the table.

\begin{table}[htb]
\centering
\begin{tabular}{|c|c|c|c|c|}
\hline
\multicolumn{5}{|c|}{Theorem~\ref{thm2} for $n=4$}\\
\hline
\hline
$n \: \downarrow \quad k \: \rightarrow$ & 1 & 2 & 3 & 4 \\
\hline
4 & $Q_1(4) =$ 7 & $Q_2(4) =$ 3 & $Q_3(4) =$ 1 & $Q_4(4) =$ 1\\
\hline
5 & - & $Q_2(5) =$ 4 & $Q_3(5) =$ 2 & $Q_4(5) =$ 1 \\
\hline
6 & - & - & $Q_3(6) =$ 4 & $Q_4(6) =$ 2 \\
\hline
7 & - & - & - & $Q_4(7) =$ 3 \\
\hline
\hline
Total & $S(4) \: =$ {\bf 7} & $S(4) \: =$ {\bf 7} & $S(4) \: =$ {\bf 7} & $S(4) \: =$ {\bf 7}\\
\hline
\end{tabular}
\caption{Example of our Extension of Stanley's Theorem}
\label{tab2}
\end{table}

\section{Corollaries of the Extension}
\label{alter}
In this section we propose a few corollaries which follow directly from the Lemmas~\ref{lem1} and~\ref{lem2}, or from Theorem~\ref{thm2} itself. Our results can be exploited to provide alternate proofs of existing facts and analogues to established results in the field of integer partitions.

\subsection{Alternate Proofs of Existing Results}
One may find the following two results existing in the current 
literature~\cite{sloane1,sloane2} on partitions of integers. We shall provide alternate proofs to these results using Lemmas~\ref{lem1} and~\ref{lem2}.

\begin{result}
\label{res1}
Given any positive integer $n$, abiding by the previous notation, one has
$$ Q_1(n) = \sum_{i=0}^{n-1} P(i).$$
\end{result}
\begin{proof}
From Lemma~\ref{lem1}, we know that $Q_k(n+k) = Q_k(n) + R_k(n+k)$ for any pair of positive integers $(n, k)$. Also, from Lemma~\ref{lem2}, we get that $R_k(n+k) = P(n)$ for all pairs $(n, k)$. Choosing $k = 1$ and using these two lemmas appropriately, we obtain the following.
$$Q_1(n) = Q_1( (n-1) + 1) = Q_1(n-1) + R_1(n) = Q_1(n-1) + P(n-1)$$
Following an iterative model based on this relation, one may easily obtain the result as
$$Q_1(n) = \sum_{i = 1}^{n} P(i-1) = \sum_{i = 0}^{n-1} P(i).$$
Hence the result.
\end{proof}

\begin{result}
\label{res2}
Given any positive integer $n$, abiding by the previous notation, one has
$$Q_k(n) = 
\sum_{\begin{subarray}{c} i=0\\ i \equiv n \bmod k \end{subarray}}^{n-1} P(i).$$
\end{result}
\begin{proof}
Once again, we use the Lemmas~\ref{lem1} and~\ref{lem2} to get
$$Q_k(n) = Q_k( (n-k) + k) = Q_k(n-k) + R_k(n) = Q_k(n-k) + P(n-k).$$
An iterative process applied on this relation produces $Q_k(n) = \sum_{j > 0} P(n - jk)$. Recalling the fact that $P(n) = 0$ for $n < 0$, we obtain
$$Q_k(n) = \sum_{j > 0} P(n - jk) = \sum_{\begin{subarray}{c} i=0\\ i = n - jk, \: j \in \mathbb{Z} \end{subarray}}^{n-1} P(i) = \sum_{\begin{subarray}{c} i=0\\ i \equiv n \bmod k \end{subarray}}^{n-1} P(i).$$ 
Hence the result.
\end{proof}

\subsection{Alternate Proof of Theorem~\ref{thm2}}
Here we provide an alternate proof of Theorem~\ref{thm2}, using the two results proved in the previous subsection, instead of the lemmas that we used in the original proof.

\begin{theorem}[Theorem~\ref{thm2} Restated]
Given any pair of positive integers $(n, k)$,
$$S(n) \: = \: Q_k(n) + Q_k(n+1) + Q_k(n+2) + \cdots + Q_k(n+k-1) 
\: = \: \sum_{i=0}^{k-1} Q_k(n+i).$$
\end{theorem}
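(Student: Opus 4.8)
The plan is to prove the restated theorem using Results~\ref{res1} and~\ref{res2} directly, reducing everything to the single known identity $Q_1(n) = \sum_{i=0}^{n-1} P(i)$. The strategy is to expand the right-hand side $\sum_{i=0}^{k-1} Q_k(n+i)$ by substituting the closed form from Result~\ref{res2} for each term, and then to show that the resulting double sum collapses to $Q_1(n)$, which equals $S(n)$ by Stanley's theorem. The appeal of this route is that it sidesteps the double induction of the original proof and turns the claim into a counting/reindexing exercise over residue classes.

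The key steps, in order, are as follows. First I would write, for each $i$ with $0 \le i \le k-1$, the expansion $Q_k(n+i) = \sum P(m)$ where $m$ ranges over nonnegative integers strictly less than $n+i$ with $m \equiv n+i \pmod{k}$. Second, I would observe that as $i$ runs over the complete set of residues $\{0, 1, \ldots, k-1\}$, the quantity $n+i$ hits every residue class modulo $k$ exactly once; consequently the index $m$ in the combined double sum ranges over \emph{all} residue classes modulo $k$, i.e. over all nonnegative integers without any congruence restriction. Third, I would carefully handle the upper limits: the term for residue $i$ sums $P(m)$ for $m < n+i$, so I must verify that after merging, each value $P(0), P(1), \ldots, P(n-1)$ is counted exactly once and no value $P(m)$ with $m \ge n$ sneaks in. Fourth, once the double sum is shown to equal $\sum_{i=0}^{n-1} P(i)$, I invoke Result~\ref{res1} to rewrite this as $Q_1(n)$, and finally Theorem~\ref{thm1} to conclude $S(n) = Q_1(n) = \sum_{i=0}^{k-1} Q_k(n+i)$.

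The main obstacle I expect is the bookkeeping at the boundary of the ranges. The congruence classes themselves merge cleanly, but the upper cutoff $m < n+i$ varies with $i$, so a naive union would overcount values $m$ in the range $n \le m < n+k-1$. The crux is to check that for each such borderline $m$, the residue condition $m \equiv n+i \pmod k$ is satisfiable by an admissible $i \in \{0,\ldots,k-1\}$ precisely when $m < n$, so that the extra tail contributed by the larger upper limits is exactly cancelled by the residue constraint excluding those very terms. Making this cancellation rigorous — showing that each $P(m)$ for $0 \le m \le n-1$ appears with total multiplicity one while every $P(m)$ for $m \ge n$ appears with multiplicity zero — is where the real care is needed; the rest is routine reindexing.
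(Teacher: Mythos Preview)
Your proposal is correct and follows essentially the same route as the paper's own proof of this restated version: both arguments use Result~\ref{res2} to split $\sum_{m=0}^{n-1} P(m)$ into $k$ sums over residue classes modulo $k$, identify each class-sum with a term $Q_k(n+j)$, and then invoke Result~\ref{res1} together with Stanley's theorem. Your explicit attention to the boundary bookkeeping---checking that for $n \le m \le n+k-2$ the unique admissible $i\in\{0,\dots,k-1\}$ with $m\equiv n+i\pmod{k}$ is $i=m-n$, which fails the strict inequality $m<n+i$---is in fact more careful than the paper, which glosses over this point by writing the upper limit as $n-1$ rather than $n+j-1$ without comment.
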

\begin{proof}
We have $S(n) = Q_1(n) = \sum_{i=0}^{n-1} P(i)$ by combining Stanley's theorem 
and Result~\ref{res1} as stated above. Let us define the following set of 
partitions.
$$P_n = \{ P(0), P(1), P(2), \ldots, P(n-1) \}.$$
Then, $S(n)$ is equal to the sum over all these partitions in $P_n$. Given the 
positive integer $k$, we can distribute $P_n$ over some disjoint copies of 
congruence classes as follows.
\begin{eqnarray*}
P_n & = & \{ P(i), P(i+1), \ldots, P(i+k-1) \: | \: i \equiv n \bmod k \} \\
& = & \bigcup_{j=0}^{k-1} \{ P(i+j) \: | \: 0 \leq i+j < n, \: i 
\equiv n \bmod k \} 
\end{eqnarray*} 
From the above-mentioned distribution of $P_n$, one may easily deduce that
\begin{eqnarray*}
S(n) \: = \: Q_1(n) & = & \sum_{i=0}^{n-1} P(i) \\
& = & \sum_{j=0}^{k-1} \sum_{\begin{subarray}{c} i+j=0\\ i 
\equiv n \bmod k \end{subarray}}^{n-1} P(i + j) \\ 
& = & \sum_{j=0}^{k-1} \sum_{\begin{subarray}{c} i+j=0\\ i+j 
\equiv n+j \bmod k \end{subarray}}^{n-1} P(i + j) \\
& = & \sum_{j=0}^{k-1} Q_k(n+j), \quad \mbox{by Result~\ref{res2}}.
\end{eqnarray*}
Hence the result, which holds true for any positive integral values of $n$ and $k$.
\end{proof}

Though the results proved in the previous subsection existed in the literature, there exists no attempt to derive this generalization of Stanley's theorem using these. This is to the best of our knowledge about the literature on integer partitions.

\subsection{Analogue of Ramanujan's Results}
In the theory of integer partitions, an array of elegant congruence relations for partition function $P(n)$ were proposed by Ramanujan. He proved that for every non-negative $n \in \mathbb{Z}$,
\begin{eqnarray*}
p(5n + 4) & \equiv & 0 \pmod{5},\\
p(7n + 5) & \equiv & 0 \pmod{7},\\
p(11n + 6) & \equiv & 0 \pmod{11},
\end{eqnarray*}
and he conjectured that there exist such congruence modulo arbitrary powers of 5, 7, 11. A lot of eminent mathematicians have worked on similar results for a long time, and the best result till date is: ``there exist such congruence relations for all non-negative integers which are co-prime to 6''. This result was proved by Ahlgren and Ono~\cite{ono1}. In this light, we propose a simple analogue to the Ramanujan results that follow directly from Lemmas~\ref{lem1} and~\ref{lem2}. 

In simple words, our first result can be stated as {\em ``for any non-negative integer $n$, the number of times 5 occurs in the unordered partitions of $5n + 4$ is divisible by 5''}. The other two results fall in the same line with divisibility by 7 and 11 respectively. Stated formally, the analogues of Ramanujan results that we propose are for the function $Q_k(n)$, as follows.

\begin{theorem}
Given any non-negative integer $n$, following the notation as before, one has
\begin{eqnarray*}
Q_{5} (5n + 4) & \equiv & 0 \pmod{5},\\
Q_{7} (7n + 5) & \equiv & 0 \pmod{7},\\
Q_{11} (11n + 6) & \equiv & 0 \pmod{11}.
\end{eqnarray*}
\end{theorem}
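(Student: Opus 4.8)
The plan is to combine Result~\ref{res2} with the classical congruences of Ramanujan quoted just above the statement. Result~\ref{res2} expresses $Q_k(n)$ as a sum of ordinary partition numbers $P(i)$ taken over the single residue class $i \equiv n \pmod k$ with $0 \le i < n$. The key observation is that when the argument of $Q_k$ is chosen to be of Ramanujan's form, this residue class lines up \emph{exactly} with the arithmetic progression on which Ramanujan's congruence is known to hold, so the divisibility can simply be read off term by term.

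First I would handle the modulus $5$. Applying Result~\ref{res2} with $k=5$ to the argument $5n+4$ gives $Q_5(5n+4) = \sum P(i)$, where the sum runs over those $i$ with $0 \le i \le 5n+3$ and $i \equiv 5n+4 \equiv 4 \pmod 5$. Reindexing via $i = 5m+4$, the upper bound $i \le 5n+3$ translates to $0 \le m \le n-1$, so that $Q_5(5n+4) = \sum_{m=0}^{n-1} P(5m+4)$. Since Ramanujan's theorem asserts $P(5m+4) \equiv 0 \pmod 5$ for every $m$, each summand is divisible by $5$, and hence so is $Q_5(5n+4)$.

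The remaining two cases are identical in spirit. Result~\ref{res2} with $k=7$ on the argument $7n+5$ picks out the class $i \equiv 7n+5 \equiv 5 \pmod 7$, giving $Q_7(7n+5) = \sum_{m=0}^{n-1} P(7m+5)$; similarly $k=11$ on $11n+6$ picks out $i \equiv 11n+6 \equiv 6 \pmod{11}$, giving $Q_{11}(11n+6) = \sum_{m=0}^{n-1} P(11m+6)$. Invoking the congruences $P(7m+5) \equiv 0 \pmod 7$ and $P(11m+6) \equiv 0 \pmod{11}$ term by term then completes the proof.

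There is no genuine obstacle here once Result~\ref{res2} is available; the argument is essentially mechanical. The only point demanding care is the summation bookkeeping---one must confirm that the residue class produced by Result~\ref{res2} (namely $i \equiv n \bmod k$ for the Ramanujan-form argument) coincides precisely with the Ramanujan progression, and that the reindexed upper limit is exactly $n-1$ so that no stray partition value lying outside a Ramanujan class sneaks into the sum. This exact alignment of residue classes is what makes the analogue fall out so cleanly.
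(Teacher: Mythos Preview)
Your argument is correct. Result~\ref{res2} applied to $Q_5(5n+4)$ indeed picks out exactly the indices $i=5m+4$ with $0\le m\le n-1$, so $Q_5(5n+4)=\sum_{m=0}^{n-1}P(5m+4)$, and Ramanujan's congruence kills every summand; the cases $k=7,11$ are identical.

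The paper takes a slightly different route: rather than invoking the closed form of Result~\ref{res2}, it proceeds by induction on $n$, using Lemmas~\ref{lem1} and~\ref{lem2} at each step to obtain the recurrence
\[
Q_5(5(n+1)+4)=Q_5(5n+4)+P(5n+4),
\]
and then appealing to the inductive hypothesis together with $P(5n+4)\equiv 0\pmod 5$. Your approach is really the same induction already summed: Result~\ref{res2} is precisely the telescoped version of that recurrence, so you are using the lemmas in pre-packaged form. The gain is that your presentation is shorter and makes the structural reason for the congruence transparent---$Q_k$ on the Ramanujan progression is literally a partial sum of $P$ along that same progression---whereas the paper's inductive version keeps the dependence on Lemmas~\ref{lem1} and~\ref{lem2} explicit. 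Neither approach requires any idea the other lacks.
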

\begin{proof}
Let us prove the case for 5, and the other results will follow from similar idea. We will prove the result for 5 using an induction on $n$, as follows.
\begin{description}
\item [$n = 0$ :] In this case, $Q_5(4) = 0 \equiv 0 \pmod{5}$ is trivially true.

\item [$n = 1$ :] One can verify that $Q_5(9) = 5 \equiv 0 \pmod{5}$.

\item [$n$ :] Let us assume that $Q_5(5n + 4) \equiv 0 \pmod{5}$ for some integer $n > 1$.

\item [$n+1$ :] From Lemma~\ref{lem1} and~\ref{lem2}, using $k=5$, we obtain the following.
\begin{eqnarray*}
Q_5(5(n+1) + 4) = Q_5((5n+4) + 5) &=& Q_5(5n+4) + R_5((5n+4) + 5)\\ &=& Q_5(5n+4) + P(5n+4) \equiv 0 \pmod{5}
\end{eqnarray*}
The final congruence holds from our assumption and the first Ramanujan congruence.
\end{description}
Hence, by induction on $n$, the first result for modulus 5 is proved. One can similarly prove the results for moduli 7 and 11 using Lemmas~\ref{lem1} and~\ref{lem2} appropriately.
\end{proof}

One can also derive analogous results for higher order Ramanujan congruence, such as $Q_5(25n+24) \equiv 0 \pmod{5^2}$, $Q_5(125n+99) \equiv 0 \pmod{5^3}$ etc., in a similar fashion.

\subsection{Alternate Proof of Ramanujan's Results}
In this section, we propose a novel alternate proof of Ramanujan's congruence results for the partition function $P(n)$. To the best of our knowledge, this approach has not been proposed in the literature till date. For this alternate proof, we shall use the following relation, which we obtained in the previous section.
$$ P(5n+4) = Q_k(5(n+1) + 4) - Q_k(5n+4) $$
And in turn, we will require the study of $Q_k(n)$ in terms of generating functions to prove Ramanujan's results for $P(n)$. Let us prove the first relation (case with modulo 5), and the other proofs will follow similarly. 

\subsubsection{Generating Function of $Q_k(n)$}
The generating function for the partition function $P(n)$ is given by
$$ F(x) = \sum_{m=0}^{\infty} P(m) \cdot x^m  = \prod_{n=1}^{\infty} \frac{1}{1 - x^n}$$ 
where we assume $P(0) = 1$. In this formula, we count the coefficient of $x^m$ on both sides, where the coefficient on the right hand side is the result of counting all possible ways that $x^m$ is generated by multiplying smaller or equal powers of $x$. This obviously gives the number of partitions of $m$ into smaller or equal parts.

What we require for $Q_k(n)$ is to count the number of $k$'s occurring in each of these partitions. Thus, we want to (i) add $r$ to the count if $x^{rk}$ is a member involved from the right hand side, and (ii) not count any of the partitions where no power of $x^k$ is involved. This intuition gives rise to the following generating function for $Q_k(n)$.
\begin{eqnarray*}
G_k(x) = \sum_{m=0}^{\infty} Q_k(m) \cdot x^m  &=& \frac{1 \cdot x^k + 2 \cdot x^{2k} + 3 \cdot x^{3k} + \cdots }{(1 - x) \cdot (1 - x^2) \cdots (1 - x^{k-1}) \cdot (1 - x^{k+1}) \cdots }\\
& = & (1 - x^k) \cdot (x^k + 2 x^{2k} + 3 x^{3k} + \cdots )  \cdot \prod_{n=1}^{\infty} \frac{1}{1 - x^n} \\
& = & (x^k + x^{2k} + x^{3k} + \cdots )  \cdot \prod_{n=1}^{\infty} \frac{1}{1 - x^n} \\
& = & \frac{x^k}{1 - x^k} \cdot \prod_{n=1}^{\infty} \frac{1}{1 - x^n}.
\end{eqnarray*}
Our next goal is to use the generating function $G_5(x)$ to prove $Q_5(5n+4) \equiv 0 \pmod{5}$.

\subsubsection{Proof of $Q_5(5n+4) \equiv 0 \pmod{5}$}
We prove this along the same line as Ramanujan's proof for $P(5n+4)$. Notice that we have
\begin{align}
\label{rameq1}
& x \cdot [(1-x)\cdot(1-x^2)\cdot(1-x^3)\cdots]^4 \nonumber \\
& \quad \quad = x \cdot (1 - 3x + 5x^2 - 7x^3 + \cdots) \cdot (1 - x - x^2 + x^3 + \cdots) \nonumber \\
& \quad \quad = \sum_{\mu = 0}^{\infty} \sum_{\nu = -\infty}^{\infty} (-1)^{\mu + \nu} \cdot (2\mu + 1) \cdot x^{1+\frac{1}{2}\mu(\mu+1)+\frac{1}{2}\nu(3\nu+1)}
\end{align}
In the above expression, let us consider the coefficient of $x^{5n}$. In this case, we have
\begin{align}
\label{rameq2}
& 1+\frac{1}{2}\mu(\mu+1)+\frac{1}{2}\nu(3\nu+1) \equiv 0 \pmod{5} \nonumber\\
\Rightarrow \quad & 8+ 4\mu(\mu+1)+4\nu(3\nu+1) \equiv 0 \pmod{5} \nonumber\\
\Rightarrow \quad & (2\mu+1)^2 + 2(\nu+1)^2 \equiv 0 \pmod{5}
\end{align}
Note that $(2\mu+1)^2$ is congruent to 0, 1 or 4 modulo 5, whereas $2(\nu+1)^2$ is congruent to 0, 2 or 3 modulo 5. Thus, to satisfy Equation~\eqref{rameq2}, we must have both $(2\mu+1)^2$ and $2(\nu+1)^2$ congruent to 0 modulo 5, i.e., $(2\mu + 1) \equiv 0 \pmod{5}$ to be specific. Using this in conjunction with Equation~\eqref{rameq1}, we get that the coefficient of $x^{5n}$ in $x \cdot [(1-x)\cdot(1-x^2)\cdot(1-x^3)\cdots]^4$ is divisible by 5.

Again, considering the coefficients of $\frac{1}{1-x^5}$ and $\frac{1}{(1-x)^5}$ modulo 5, we get
$$ \frac{1}{(1-x)^5} \equiv \frac{1}{1 - x^5} \pmod{5} \quad \Rightarrow \quad \frac{1 - x^5}{(1-x)^5} \equiv 1 \pmod{5}. $$
Thus, all the coefficients (except the first one) in the expression
$$ \frac{1 - x^5}{(1-x)^5} \cdot \frac{1 - x^{10}}{(1-x^2)^5} \cdot \frac{1 - x^{15}}{(1-x^3)^5} \cdots = [(1-x^5)\cdot(1-x^{10})\cdots ] \cdot \prod_{n=1}^{\infty} \frac{1}{(1-x^n)^5} $$
are divisible by 5. Hence, the coefficient of $x^{5n}$ in the expansion of
\begin{align*}
& x\cdot[(1-x^5)\cdot(1-x^{10})\cdots ] \cdot \prod_{n=1}^{\infty} \frac{1}{1-x^n}\\ 
& \: = x\cdot[(1-x)\cdot(1-x^2)\cdot(1-x^3)\cdots ]^4 \cdot [(1-x^5)\cdot(1-x^{10})\cdots ] \cdot \prod_{n=1}^{\infty} \frac{1}{(1-x^n)^5}
\end{align*}
is also divisible by 5, by the virtue of the previous discussion. This in turn proves that the coefficient of $x^{5n}$ is the expansion of
$$ x \cdot \frac{x^5}{1-x^5} \cdot \prod_{n=1}^{\infty} \frac{1}{1-x^n} = x^5 \cdot (1+x^5+x^{10}+x^{15}+\cdots) \cdot x \cdot \prod_{n=1}^{\infty} \frac{1}{1-x^n} $$
is divisible by 5. Now, recall the generating function of $Q_5(n)$:
$$ G_5(x) = \sum_{m=0}^{\infty} Q_5(m) \cdot x^m = \frac{x^5}{1-x^5} \cdot \prod_{n=1}^{\infty} \frac{1}{1-x^n}. $$
The discussion so far reveals that the coefficient of $x^{5n}$ in the expansion of $x \cdot G_5(x)$ is divisible by 5, i.e., the coefficient of $x^{5n-1}$ or $x^{5n+4}$ in the expansion of $G_5(x)$ is divisible by 5. Therefore, we have the desired result:
$$ Q_5(5n+4) \equiv 0 \pmod{5} \quad \mbox{ for all non-negative $n \in \mathbb{Z}$.} $$

\subsubsection{Proof of $P(5n+4) \equiv 0 \pmod{5}$}
Recall the following result from the proof of the analogue of Ramanujan's results.
$$ P(5n+4) = Q_5(5(n+1)+4) - Q_5(5n+4) \quad \mbox{ for all non-negative $n \in \mathbb{Z}$.} $$
This readily provides us with $P(5n + 4) \equiv 0 \pmod{5}$ for all non-negative $n\in \mathbb{Z}$. This gives a new alternate proof of Ramanujan's congruence results using the concept of $Q_k(n)$. The results for $k = 7, 11$ may be derived similarly.

\section{Conclusion}
In this paper we have presented an extension of Stanley's theorem for partitions of positive integers. There exists an extension of Stanley's theorem in the literature, which had been proposed by Elder in 1984. But our generalization opens a new line of thought in this direction. We have also proposed some interesting corollaries to illustrate potential applications of this generalization towards the existing results in integer partitions. 

In particular, we prove a couple of existing results using alternative techniques and propose analogues to Ramanujan's results on congruence behavior of functions related to partitions. We have also proposed an alterate proof of Ramanujan's results using the concept of $Q_k(n)$ and its generating function.

A couple of interesting research problems related to this topic may be to study Elder's theorem in the light of the results proposed in this paper, and to propose analogues to more involved congruence results for $P(n)$ proved in the literature.

{\flushleft \bf Acknowledgement:} The authors would like to thank their advisor Professor Subhamoy Maitra of Applied Statistics Unit, Indian Statistical Institute, for numerous technical discussions and untiring support throughout the project. The authors also like to express their gratitude towards Professor Pantelimon Stanica for his preliminary feedback on the technical content of this paper.

\end{document}